\documentclass[%
 aip,
 sd,%
 amsmath,amssymb,
]{revtex4-1}

\usepackage{dcolumn}
\usepackage{bm}
\usepackage{amsthm,mathrsfs}
\usepackage{color}
\newcommand{\mm}[1]{#1}

\newcommand{\Rm}{\mathbb{R}}
\newcommand{\Cm}{\mathbb{C}}

\newcommand{\Sm}{\mathbb{S}}
\newcommand{\ba}{\begin{eqnarray*}}
\newcommand{\ea}{\end{eqnarray*}}
\newcommand{\be}{\begin{equation}}
\newcommand{\ee}{\end{equation}}
\newcommand{\bea}{\begin{eqnarray}}
\newcommand{\eea}{\end{eqnarray}}
\newcommand{\va}{\varphi}
\newcommand{\pp}{\partial}
\newcommand{\vv}[1]{\boldsymbol{\mathrm{#1}}}
\newcommand{\hvv}[1]{\boldsymbol{\hat{\mathrm{#1}}}}
\newcommand{\uv}{\boldsymbol{{\hat{\Omega}}}}
\newcommand{\uvk}{\boldsymbol{\hat{\mathrm{k}}}}

\newcommand{\pint}{\:\mathcal{P}\!\!\int}
\newcommand{\rrf}[1]{\mathop{\mathcal{R}_{{#1}}}}
\newcommand{\irrf}[1]{\mathop{\mathcal{R}_{{#1}}^{-1}}}

\newtheorem{thm}{Theorem}[section]
\newtheorem{lem}[thm]{Lemma}

\newtheorem{prop}[thm]{Proposition}
\newtheorem{defn}[thm]{Definition}
\theoremstyle{remark}\newtheorem{rmk}[thm]{Remark}

\begin{document}

\preprint{arXiv:1511.05723}

\title[]{
The radiative transport equation in flatland with separation of variables}

\author{Manabu Machida}
\email{mmachida@ms.u-tokyo.ac.jp}
\altaffiliation[]{Department of Mathematical Sciences, 
The University of Tokyo, Komaba, Meguro, Tokyo 153-8914, Japan}
\altaffiliation{\mm{Current address: 
Institute for Medical Photonics Research, 
Hamamatsu University School of Medicine, 
Hamamatsu, Shizuoka 431-3192, Japan}
}

\date{\today}

\begin{abstract}
The linear Boltzmann equation can be solved with separation of variables
in one dimension, i.e., in three-dimensional space with planar symmetry.
In this method, solutions are given by superpositions of eigenmodes which
are sometimes called singular eigenfunctions. In this paper, we explore
the singular-eigenfunction approach in flatland or two-dimensional space.
\end{abstract}

\maketitle

\section{Introduction}

We consider the radiative transport equation or linear Boltzmann equation in 
flatland or in two spatial dimensions. The Green's function 
$G(\vv{\rho},\va;\va_0)$ ($\vv{\rho}={^t}(x,y)\in\Rm^2$, $0\le\va\mm{<}2\pi$, 
$0\le\va_0\mm{<}2\pi$) satisfies
\be
\left(\uv\cdot\nabla+1\right)G(\vv{\rho},\va;\va_0)=
\varpi\int_0^{2\pi}p(\va,\va')G(\vv{\rho},\va';\va_0)\,d\va'+
\delta(\vv{\rho})\delta(\va-\va_0),
\label{rte2d}
\ee
where $\uv={^t}(\cos\va,\sin\va)$ is a unit vector in $\Sm$, 
$\nabla={^t}(\mm{\frac{\pp}{\pp x}},\mm{\frac{\pp}{\pp y}})$, and 
$\varpi\in(0,1)$ is the albedo for single scattering. We have
\[
G(\vv{\rho},\va;\va_0)\to0\quad\mbox{as}\quad|\vv{\rho}|\to\infty.
\] 
We suppose \mm{that} the scattering phase function 
$p(\va,\va')\in L^{\infty}(\Sm\times\Sm)$ is nonnegative and is 
\mm{normalized as}
\[
\int_0^{2\pi}p(\va,\va')\,d\va'=1.
\]
\mm{We assume that $p(\va,\va')$ is given by}
\ba
p(\va,\va')
&=&
\frac{1}{2\pi}\sum_{m=-L}^L\beta_me^{im(\va-\va')}
\\
&=&
\frac{1}{2\pi}+\frac{1}{\pi}\sum_{m=1}^L\beta_m\cos[m(\va-\va')],
\ea
where $L\ge0$, $\beta_0=1$, \mm{$-1<\beta_l<1$ ($l>0$),} and 
$\beta_{-m}=\beta_m$. We put
\[
\beta_m=0\qquad\mbox{for}\quad |m|>L.
\]
The Henyey-Greenstein model \cite{Henyey-Greenstein41} is obtained by taking 
the limit $L\to\infty$ and putting $\beta_m=\mathrm{g}^{|m|}$ with a constant 
$\mathrm{g}\in(-1,1)$, where 
$\mathrm{g}=\int_0^{2\pi}\cos(\va-\va')p(\va,\va')\,d\va'$.

The radiative transport equation which depends on one spatial variable in 
three dimensions has attracted a lot of attention in linear transport 
theory. The singular-eigenfunction approach was explored as early as 1945 by 
Davison \cite{Davison45}. After further efforts such as Van Kampen 
\cite{VanKampen55}, Davison \cite{Davison57}, and Wigner \cite{Wigner61}, 
Case established a way of 
finding solutions with separation of variables \cite{Case59,Case60}. 
The method, called Case's method \cite{Case-Zweifel,Duderstadt-Martin}, was 
soon extended to anisotropic scattering \cite{McCormick-Kuscer66,Mika61}.

On the other hand, the technique of rotated reference frames has existed 
in transport theory since 1964 \cite{Dede64,Kobayashi77}. This method didn't 
sound promising even though the idea was interesting. However, a decade ago, 
Markel succeeded in constructing an efficient numerical algorithm 
\cite{Markel04}, which is called the method of rotated reference frames 
\cite{Machida10,Panasyuk06,Schotland-Markel07}, to find solutions \mm{of} the 
three-dimensional radiative transport equation by reinventing rotated 
reference frames.

Recently, the above two, separation of variables and rotated reference frames, 
were merged and Case's method was extended to three spatial variables 
\cite{Machida14}. \mm{There, r}otated reference frames provide a tool to 
\mm{reduce three-dimensional equations to one dimension. With this tool, f}or 
example, the $F_N$ method \cite{Siewert78,Siewert-Benoist79} was extended to 
three dimensions \cite{Machida15}.

The radiative transport equation is used in various subfields in science 
and engineering \cite{Apresyan-Kravtsov} such as light propagation in 
biological tissue \cite{Arridge99,Arridge-Schotland09}, clouds, and ocean 
\cite{Sobolev,Thomas-Stamnes}, seismic waves \cite{Sato-Fehler}, light in the 
interstellar medium \cite{Chandrasekhar,Peraiah}, neutron transport 
\cite{Case-Zweifel}, and remote sensing \cite{Ishimaru}. In these cases, 
usually three dimensions are most important. There are, however, cases where 
two dimensions have particular interests. Such flatland transport equations 
appear, for example, in wave scattering in the marginal ice zone 
\cite{Kohout-Meylan06} and wave transport along a surface with random 
impedance \cite{Bal-etal00}. Sometimes optical tomography is considered 
in flatland \cite{Arridge-etal06,GR-Kim15,Heino-Arridge-Sikora-Somersalo03,Klose-Netz-Beuthan-Hielscher02,Tarvainen-Vauhkonen-Arridge08}. 
The two-dimensional transport equation is also used for thermal 
radiative transfer \cite{Johnson-Larsen11} and heat transfer 
\cite{Volokitin-Persson01}. 
We note that the method of rotated reference frames was applied to 
two-dimensional space \cite{LK11,LK12,LK13}.

In this paper, we consider the linear Boltzmann equation or radiative 
transport equation in flatland. Let $\mu$ denote the cosine of $\va$:
\[
\mu=\cos\va,\qquad\va\in[0,2\pi\mm{)}.
\]
Let us introduce polynomials $\gamma_m(z)$ ($z\in\Cm$) which satisfy the 
following three-term recurrence relation.
\be
2\nu h_m\gamma_m(\nu)-\gamma_{m+1}(\nu)-\gamma_{m-1}(\nu)=0,
\label{gammarecurr}
\ee
with initial terms
\[
\gamma_0(\nu)=1,\qquad\gamma_1(\nu)=(1-\varpi)\nu.
\]
Here,
\[
h_m=1-\varpi\beta_m.
\]
We have
\[
\gamma_m(-\nu)=(-1)^m\gamma_m(\nu),\qquad\gamma_{-m}(\nu)=\gamma_m(\nu).
\]
The function $g(z,\va)$ is given by
\be
g(\nu,\va)=1+2\sum_{m=1}^L\beta_m\gamma_m(\nu)\cos{m\va}.
\label{deffuncg}
\ee
We introduce
\be
\Lambda(z)=1-\frac{\varpi z}{2\pi}\int_0^{2\pi}\frac{g(z,\va)}{z-\mu}\,d\va,
\qquad z\in\Cm\setminus[-1,1].
\label{defbigLambda}
\ee
Suppose $\Lambda(z)$ has $M=M(L,\varpi,\beta_m)$ positive roots. Let 
$\nu_j$ ($j=0,\dots,M-1$) be positive roots which satisfy $\Lambda(\nu_j)=0$. 
We further introduce
\be
\lambda(\nu)=1-\frac{\varpi\nu}{2\pi}\pint_0^{2\pi}
\frac{g(\nu,\va)}{\nu-\mu}\,d\va,\qquad\nu\in(-1,1),
\label{deflittlelambda}
\ee
where $\mathcal{P}$ denotes Cauchy's principal value. In the next section, we 
will see that singular eigenfunctions in flatland are obtained as
\be
\phi(\nu,\va)=\frac{\varpi\nu}{2\pi}\mathcal{P}\frac{g(\nu,\va)}{\nu-\mu}
+\mm{\frac{\sqrt{1-\nu^2}}{2}}\lambda(\nu)\delta(\nu-\mu),
\label{defsingulareigenfunctions}
\ee
where $\nu=\pm\nu_j$ ($j=0,\dots,M-1$) or $\nu\in(-1,1)$. Let us introduce 
the normalization factor
\be
\mathcal{N}(\nu)=
\left\{\begin{aligned}
\frac{2\nu}{\sqrt{1-\nu^2}}\left[\left(\frac{\varpi\nu}{2}\right)^2
g(\nu,\va_{\nu})^2+\lambda(\nu)^2\right],
&\quad\nu\in(-1,1),
\\
\left(\frac{\varpi\nu}{2}\right)^2g(\nu,\va_{\nu})\frac{d\Lambda(\nu)}{d\nu},
&\quad\nu\notin[-1,1].
\end{aligned}\right.
\label{normalizationfactor}
\ee
Here,
\be
\va_{\nu}=\left\{\begin{aligned}
\cos^{-1}{(\nu)},&\quad\nu\in[-1,1],
\\
i\cosh^{-1}{(\nu)},&\quad\nu>1,
\\
\pi+i\cosh^{-1}{(|\nu|)},&\quad\nu<-1.
\end{aligned}\right.
\label{defvarphinu}
\ee
We note that $0\le\cos^{-1}{\nu}\le\pi$ for $\nu\in[-1,1]$ and 
$\cosh^{-1}{(|\nu|)}=\ln\left(|\nu|+\sqrt{\nu^2-1}\right)$ for $|\nu|>1$. 
Similarly, we use $\va_{\uvk(\nu q)}$ for the analytically continued angle 
such that
\[
\cos{\va_{\uvk(\nu q)}}=\sqrt{1+(\nu q)^2},\qquad
\sin{\va_{\uvk(\nu q)}}=-i\nu q,
\]
for $\nu,q\in\Rm$. As is shown in Section \ref{2dGreen}, the Green's function 
in flatland is obtained as
\bea
G(\vv{\rho},\va;\va_0)
&=&
\frac{1}{2\pi}\int_{-\infty}^{\infty}e^{iqy}\Biggl[
\sum_{j=0}^{M-1}
\phi(\pm\nu_j,\va-\va_{\uvk(\pm\nu_jq)})
\phi(\pm\nu_j,\va_{\mm{0}}-\va_{\uvk(\pm\nu_jq)})
\nonumber \\
&\times&
\frac{1}{\sqrt{1+(\nu_jq)^2}\mathcal{N}(\nu_j)}
e^{-\sqrt{1+(\nu_jq)^2}|x|/\nu_j}
\nonumber \\
&+&
\int_0^1
\phi(\pm\nu,\va-\va_{\uvk(\pm\nu q)})
\phi(\pm\nu,\va_{\mm{0}}-\va_{\uvk(\pm\nu q)})
\nonumber \\
&\times&
\frac{1}{\sqrt{1+(\nu q)^2}\mathcal{N}(\nu)}
e^{-\sqrt{1+(\nu q)^2}|x|/\nu}\,d\nu
\Biggr]\,dq,
\label{mainresult}
\eea
where upper signs are used for $x>0$ and lower signs are used for $x<0$.

The main purpose of the present paper is to derive (\ref{mainresult}). 
We will first consider the one-dimensional problem in two spatial dimensions 
with separation of variables in Sections \ref{1d} and 
\ref{1dGreen}. Two-dimensional singular eigenfunctions are considered in 
Section \ref{2d}. In particular their orthogonality relations are established. 
Then in Section \ref{2dGreen}, we obtain the Green's function for the 
radiative transport equation in two dimensions by extending the 
one-dimensional problem to two dimensions using rotated reference frames. 
Finally, Section \ref{conclusions} is devoted to concluding remarks. In 
Appendix, the Fourier-transform method is explained as an alternative 
approach.

\section{One-dimensional transport theory in flatland}
\label{1d}

We begin with the one-dimensional homogeneous problem \mm{given by}
\be
\left(\mu\frac{\pp}{\pp x}+1\right)\psi(x,\va)
=\varpi\int_0^{2\pi}p(\va,\va')\psi(x,\va')\,d\va'.
\label{rte1dhomo}
\ee
We assume that solutions are given by the following form of separation of 
variables with separation constant $\nu$\mm{.}
\[
\psi_{\nu}(x,\va)=\phi(\nu,\va)e^{-x/\nu}.
\]
We normalize $\phi(\nu,\va)$ as
\[
\int_0^{2\pi}\phi(\nu,\va)\,d\va=1.
\]
We then have
\be
\left(1-\frac{\mu}{\nu}\right)\phi(\nu,\va)
=\frac{\varpi}{2\pi}g(\nu,\va),
\label{phieq}
\ee
where
\[
g(\nu,\va)=1+
2\sum_{m=1}^L\beta_m\left[\gamma_m(\nu)\cos{m\va}+s_m(\nu)\sin{m\va}\right].
\]
\mm{As is shown below, we have}
\bea
\gamma_m(\nu)&=&\int_0^{2\pi}\phi(\nu,\va)\cos{m\va}\,d\va,
\label{defgammam}
\\
s_m(\nu)&=&\int_0^{2\pi}\phi(\nu,\va)\sin{m\va}\,d\va.
\label{defsm}
\eea
\mm{From (\ref{phieq}), we obtain (\ref{defsingulareigenfunctions}).} Direct 
calculation shows that $\gamma_m(\nu)$ satisfy (\ref{gammarecurr}). 
Since (\ref{rte1dhomo}) implies $\phi(\nu,-\va)=\phi(\nu,\va)$, coefficients 
for $\sin{m\va}$ should be zero. Indeed, $s_m(\nu)=0$ for all $m$ as is shown 
below. For a function $f(\va)\in\Cm$, we have
\ba
\int_0^{2\pi}f(\va)\,d\va
&\mm{=}&
\mm{\int_0^{\pi}\left[f(\va)+f(\va+\pi)\right]\,d\va}
\\
&=&
\int_{-1}^1\frac{f(\cos^{-1}{\mu})+f(2\pi-\cos^{-1}{\mu})}{\sqrt{1-\mu^2}}
\,d\mu
\\
&=&
\int_{-1}^1\frac{f(\cos^{-1}{\mu})+f(\pi+\cos^{-1}{\mu})}{\sqrt{1-\mu^2}}
\,d\mu,
\ea
where we used $\int_0^{\pi}f(\va+\pi)\,d\va=\int_0^{\pi}f(2\pi-\va)\,d\va$. 
\mm{By using $\cos^{-1}(-\nu)=\pi-\cos^{-1}(\nu)$, we note that}
\[
\int_0^{2\pi}\delta(\nu-\mu)\sin(m\va)\,d\va=0,\qquad\mm{\nu\in\Rm.}
\]
\mm{If we plug (\ref{defsingulareigenfunctions}) into (\ref{defsm}),} we obtain
\[
s_m(\nu)=v_m(\nu)+\sum_{n=1}^LB_{mn}(\nu)s_n(\nu),
\]
where
\ba
v_m(\nu)&=&
\frac{\varpi\nu}{2\pi}\pint_0^{2\pi}\frac{\sin{m\va}}{\nu-\mu}\,d\va+
\frac{\varpi\nu}{\pi}\sum_{n=1}^L\beta_n\pint_0^{2\pi}
\frac{\cos(n\va)\sin(m\va)}{\nu-\mu}\,d\va,
\\
B_{mn}(\nu)&=&
\frac{\varpi\nu}{\pi}\beta_n\pint_0^{2\pi}\frac{\sin(n\va)\sin(m\va)}{\nu-\mu}
\,d\va.
\ea
However, \mm{we see that}
\[
\pint_0^{2\pi}\frac{\cos(n\va)\sin(m\va)}{\nu-\mu}\,d\va=
\mm{\pint_0^{2\pi}\frac{\sin[(m+n)\va]+\sin[(m-n)\va]}{\nu-\mu}\,d\va=}
0,
\]
for all $m,n$\mm{. H}ence $v_m(\nu)=0$. Since the $L\times L$ matrix whose 
$mn$-element is given by $\delta_{mn}-B_{mn}(\nu)$ is invertible, we see that 
$s_m(v)=0$. Therefore we obtain (\ref{deffuncg}) and singular eigenfunctions 
(\ref{defsingulareigenfunctions}). We have
\[
g(\nu,\va+2\pi)=g(\nu,\va)=g(\nu,-\va),\quad g(-\nu,\va)=g(\nu,\va+\pi).
\]
\mm{When $\nu\in(-1,1)$, we obtain (\ref{deflittlelambda}) by integrating 
(\ref{defsingulareigenfunctions}) over $\va$.} Note that 
\mm{from (\ref{deflittlelambda})}
\[
\lambda(-\nu)=
\mm{1-\frac{\varpi\nu}{2\pi}\pint_0^{2\pi}\frac{g(-\nu,\va)}{\nu+\mu}\,d\va=
1-\frac{\varpi\nu}{2\pi}\pint_0^{2\pi}\frac{g(\nu,\va)}{\nu-\mu}\,d\va=}
\lambda(\nu).
\]

For $\nu\notin[-1,1]$ we have
\[
1=\int_0^{2\pi}\phi(\nu,\va)\,d\va
=\frac{\varpi\nu}{2\pi}\int_0^{2\pi}\frac{g(\nu,\va)}{\nu-\mu}\,d\va.
\]
Therefore discrete eigenvalues $\nu\in\Cm\setminus[-1,1]$ are roots of the 
function $\Lambda(\nu)$ given in (\ref{defbigLambda}). We note that if 
$\nu$ is a discrete eigenvalue, so is $-\nu$ because 
$\Lambda(-\nu)=\Lambda(\nu)$. That is, the eigenvalues $\pm\nu$ appear in 
pairs.

Singular eigenfunctions satisfy the following relations.
\[
\phi(\nu,\va+2\pi)=\phi(\nu,\va)=\phi(\nu,-\va),\quad
\phi(-\nu,\va)=\phi(\nu,\va+\pi).
\]

\begin{prop}
Discrete eigenvalues are real.
\end{prop}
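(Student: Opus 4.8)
The plan is to recognize that at a discrete eigenvalue the singular eigenfunction is in fact an ordinary square-integrable function, and then to express the eigenvalue as a Rayleigh quotient of two self-adjoint operators whose denominator is strictly positive. Since such a quotient is manifestly real, so is the eigenvalue.

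First I would observe that a discrete eigenvalue $\nu$ lies in $\Cm\setminus[-1,1]$, so $\nu-\mu$ never vanishes for $\mu=\cos\va\in[-1,1]$; hence the delta term in (\ref{defsingulareigenfunctions}) is absent and $\phi(\nu,\va)=\frac{\varpi\nu}{2\pi}g(\nu,\va)/(\nu-\mu)$ is a bounded, smooth, hence $L^2(0,2\pi)$ function, which is nonzero because $\int_0^{2\pi}\phi(\nu,\va)\,d\va=1$. Substituting the separated ansatz into (\ref{rte1dhomo}) — equivalently, combining (\ref{phieq}) with (\ref{defgammam}) and $s_m(\nu)=0$, which turns the right-hand side of (\ref{phieq}) into $\varpi\int p(\va,\va')\phi(\nu,\va')\,d\va'$ — shows that this function solves
\be
\left(1-\frac{\mu}{\nu}\right)\phi(\nu,\va)=\varpi\int_0^{2\pi}p(\va,\va')\phi(\nu,\va')\,d\va'.
\ee
Writing $K$ for the scattering operator on the right and $A$ for multiplication by $\mu=\cos\va$, this reads $A\phi(\nu,\cdot)=\nu\,(I-K)\phi(\nu,\cdot)$, a generalized eigenvalue problem with the separation constant $\nu$ playing the role of the eigenvalue.

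Next I would record that both operators are self-adjoint on $L^2(0,2\pi)$: $A$ because $\cos\va$ is real, and $K$ because its kernel $\varpi p(\va,\va')=\frac{\varpi}{2\pi}\sum_{m=-L}^L\beta_m e^{im(\va-\va')}$ is real and symmetric. The key step is that $I-K$ is positive definite. Since $Ke^{im\va}=\varpi\beta_m e^{im\va}$, the complete orthogonal basis $\{e^{im\va}\}$ diagonalizes $I-K$ with eigenvalues $h_m=1-\varpi\beta_m$ for $|m|\le L$ and $1$ otherwise; because $\varpi\in(0,1)$ and $-1<\beta_m<1$, each of these satisfies $h_m\ge 1-\varpi>0$. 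Hence $\int_0^{2\pi}\overline{\phi(\nu,\va)}\,[(I-K)\phi](\va)\,d\va>0$ for the nonzero $\phi(\nu,\cdot)$.

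Finally, pairing the eigenvalue equation with $\phi(\nu,\cdot)$ in $L^2(0,2\pi)$ gives
\be
\nu=\frac{\int_0^{2\pi}\cos\va\,|\phi(\nu,\va)|^2\,d\va}
{\int_0^{2\pi}\overline{\phi(\nu,\va)}\,[(I-K)\phi](\va)\,d\va}.
\ee
The numerator is real because $\cos\va$ is real, and the denominator is real and strictly positive by the positivity of $I-K$; therefore $\nu\in\Rm$, which is the assertion. The step that carries the real content — and where I would be most careful — is the positive definiteness of $I-K$, which rests squarely on the hypotheses $\varpi<1$ and $|\beta_m|<1$ through $h_m\ge 1-\varpi>0$; the square-integrability of $\phi(\nu,\cdot)$, needed for the pairing to make sense, is the other point to check but is automatic once $\nu\notin[-1,1]$.
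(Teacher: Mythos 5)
Your proof is correct, and it takes a genuinely different route from the paper's. You recast the discrete-eigenvalue condition as the generalized eigenvalue problem $A\phi=\nu\,(I-K)\phi$ on $L^2(0,2\pi)$, with $A$ multiplication by $\mu=\cos\va$ and $K$ the scattering operator, and the whole weight of the argument falls on the observation that the Fourier basis $\{e^{im\va}\}$ diagonalizes $I-K$ with eigenvalues $h_m=1-\varpi\beta_m$ for $|m|\le L$ and $1$ otherwise, all bounded below by $1-\varpi>0$; the Rayleigh quotient $\nu=\int_0^{2\pi}\mu|\phi(\nu,\va)|^2\,d\va\,\big/\,\int_0^{2\pi}\overline{\phi(\nu,\va)}\,[(I-K)\phi](\va)\,d\va$ is then real over positive, hence real. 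The paper argues instead through orthogonal polynomials: it shows that the zeros of $\gamma_{m_B+1}$ are eigenvalues of a real symmetric tridiagonal (Jacobi) matrix, hence real, and then uses Christoffel--Darboux identities plus the Riemann--Lebesgue lemma (so that $P_{m_B+1}(z)/p_{m_B+1}(z)\to0$) to identify the roots of $\Lambda$ with limits of such zeros as $m_B\to\infty$. Both arguments hinge on the same hypothesis $h_m>0$ (the paper needs it to form the real entries $b_m=1/(2\sqrt{h_{m-1}h_m})$; you need it for positive definiteness of $I-K$), but they buy different things: yours is shorter, avoids the delicate limiting step $m_B\to\infty$ (where convergence of zeros to roots of $\Lambda$ really deserves a Hurwitz-type justification), and makes transparent exactly where $\varpi<1$ and $|\beta_m|\le1$ enter; the paper's, at the price of more machinery, yields as by-products the identity $\Lambda(z)=\gamma_{m_B+1}(z)/p_{m_B+1}(z)-\varpi z\,g(z,\va_z)\,P_{m_B+1}(z)/p_{m_B+1}(z)$ and the characterization of discrete eigenvalues as limiting eigenvalues of tridiagonal matrices, which underlies the Garcia--Siewert numerical scheme cited there. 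Two small bookkeeping points in your write-up: $\beta_0=1$ rather than $|\beta_0|<1$, which is harmless since $h_0=1-\varpi$ is still positive and your bound $h_m\ge1-\varpi$ holds for every $m$; and the identification of $K\phi(\nu,\cdot)$ with $\frac{\varpi}{2\pi}g(\nu,\cdot)$ at a root of $\Lambda$ (i.e., that the moments of $\phi(\nu,\cdot)$ really are the recursively defined $\gamma_m(\nu)$) is the paper's own construction, so citing (\ref{phieq}), (\ref{defgammam}) and $s_m=0$ for it, as you do, is legitimate.
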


\begin{proof}
Let $m_B$ be a positive integer such that $m_B\ge L$. We first show that if 
$\nu$ satisfies $\gamma_{m_B+1}(\nu)=0$, then this $\nu$ is an eigenvalue of 
matrix $B$, which is the real symmetric matrix defined below. Hence 
$\nu\in\Rm$. Next we show that zeros of $\gamma_{m_B+1}(\nu)$ become roots of 
$\Lambda(\nu)$ as $m_B\to\infty$. With these two, the proof is completed.

Let us note that 
\[
\left|\beta_m\right|=
\left|\beta_me^{im\va}\right|=
\left|\int_0^{2\pi}p(\va,\va')e^{im\va'}\,d\va'\right|\le
\int_0^{2\pi}p(\va,\va')\,d\va'=1.
\]
Hence $h_m>0$ for all $m$. We can rewrite the three-term 
recurrence relation (\ref{gammarecurr}) as
\[
b_m\left(\sqrt{2h_{m-1}}\gamma_{m-1}\right)+
b_{m+1}\left(\sqrt{2h_{m+1}}\gamma_{m+1}\right)=
\nu\left(\sqrt{2h_m}\gamma_m\right),
\]
where
\[
b_m=\frac{1}{2\sqrt{h_{m-1}h_m}}.
\]
Similar to \cite{Garcia-Siewert89}, we consider a tridiagonal 
$(2m_B+1)\times(2m_B+1)$ matrix $\bar{B}$ whose elements are given by
\ba
\bar{B}_{mm'}
&=&
b_m\delta_{m',m-1}+b_{m+1}\delta_{m',m+1}
\\
&+&
b_{-m_B}\sqrt{\frac{h_{-m_B-1}}{h_{-m_B}}}\frac{\gamma_{-m_B-1}}{\gamma_{-m_B}}
\delta_{m,-m_B}\delta_{m',-m_B}
\\
&+&
b_{m_B+1}\sqrt{\frac{h_{m_B+1}}{h_{m_B}}}\frac{\gamma_{m_B+1}}{\gamma_{m_B}}
\delta_{m,m_B}\delta_{m',m_B},
\ea
for $-m_B\le m\le m_B$ and $-m_B\le m'\le m_B$. Therefore if $\nu$ is a zero 
of $\gamma_{m_B+1}(\nu)$, we see that $\nu$ is an eigenvalue of the matrix $B$ 
whose elements are given by
\[
B_{mm'}=b_m\delta_{m',m-1}+b_{m+1}\delta_{m',m+1},
\]
for $-m_B\le m,m'\le m_B$. Since $B$ is real symmetric, $\nu$ is real. 
In particular we can say that $\nu\in\Rm$ even in the limit $m_B\to\infty$ 
\cite{Case74,Shultis-Hill76}.

Next we will explore the connection between roots of $\Lambda$ and 
$\gamma_{m_B+1}$ by repeatedly using the Christoffel-Darboux formula 
\cite{Garcia-Siewert82,Inonu70}.

In addition to (\ref{gammarecurr}), we introduce
\be
2zp_m(z)-p_{m+1}(z)-p_{m-1}(z)=0,
\label{precurr}
\ee
with
\[
p_0(z)=1,\qquad p_{\pm1}=z.
\]
Furthermore we define
\be
P_m(z)=\frac{1}{2\pi}\int_0^{2\pi}\frac{\cos{m\va}}{z-\mu}\,d\va.
\label{bigPdef}
\ee
We have
\be
2zP_m(z)-P_{m+1}(z)-P_{m-1}(z)=2\delta_{m0},
\label{bigPrecurr}
\ee
with
\[
P_1(z)=zP_0(z)-1.
\]
Direct calculation of $\Lambda(z)$ in (\ref{defbigLambda}) shows
\[
\Lambda(z)=1-\varpi z\left[P_0(z)+2\sum_{m=1}^L\beta_m\gamma_m(z)P_m(z)\right].
\]
Let us consider 
$P_m(z)\times(\ref{gammarecurr})-\gamma_m(z)\times(\ref{bigPrecurr})$. We have
\ba
-2\varpi z\beta_mP_m(z)\gamma_m(z)
&=&
\Bigl(P_m(z)\gamma_{m+1}(z)-P_{m+1}(z)\gamma_m(z)\Bigr)
\\
&-&
\Bigl(P_{m-1}(z)\gamma_m(z)-P_m(z)\gamma_{m-1}(z)\Bigr)
\\
&-&
2\delta_{m0}.
\ea
We then take the summation on both sides by $\sum_{m=1}^{m_B}$. As a result we 
obtain
\be
\Lambda(z)=P_{m_B}(z)\gamma_{m_B+1}(z)-P_{m_B+1}(z)\gamma_{m_B}(z).
\label{CDeq1}
\ee
Next, $\sum_{m=1}^{m_B}\left[
p_m(z)\times(\ref{bigPrecurr})-P_m(z)\times(\ref{precurr})\right]$ yields
\be
P_{m_B}(z)p_{m_B+1}(z)-P_{m_B+1}(z)p_{m_B}(z)=1.
\label{CDeq2}
\ee
Moreover we obtain by $\sum_{m=1}^{m_B}\left[
\gamma_m(z)\times(\ref{precurr})-p_m(z)\times(\ref{gammarecurr})\right]$, 
\be
\varpi zg(z,\va_z)=\gamma_{m_B}(z)p_{m_B+1}(z)-\gamma_{m_B+1}(z)p_{m_B}(z).
\label{CDeq3}
\ee
By using (\ref{CDeq1}), (\ref{CDeq2}), and (\ref{CDeq3}), we obtain
\begin{widetext}
\ba
p_{m_B+1}(z)\Lambda(z)
&=&
p_{m_B+1}(z)P_{m_B}(z)\gamma_{m_B+1}(z)-p_{m_B+1}(z)P_{m_B+1}(z)\gamma_{m_B}(z)
\\
&=&
\gamma_{m_B}(z)+\left[p_{m_B}(z)\gamma_{m_B+1}(z)-p_{m_B+1}(z)\gamma_{m_B}(z)
\right]P_{m_B+1}(z)
\\
&=&
\gamma_{m_B+1}(z)-\varpi zg(z,\va_z)P_{m_B+1}(z).
\ea
\end{widetext}
Therefore we obtain
\[
\Lambda(z)=\frac{\gamma_{m_B+1}(z)}{p_{m_B+1}(z)}
-\varpi zg(z,\va_z)\frac{P_{m_B+1}(z)}{p_{m_B+1}(z)}.
\]
However, $P_{m_B+1}(z)$ vanishes as $m_B\to\infty$ due to 
the Riemann-Lebesgue lemma. Thus discrete eigenvalues are zeros of 
$\gamma_{m_B+1}$ as $m_B\to\infty$.

\end{proof}

We suppose there are $2M$ discrete eigenvalues 
$\pm\nu_j$ ($j=0,\dots,M-1$) such that $\nu_j>1$ and $\Lambda(\pm\nu_j)=0$.

\begin{defn}
Let $\sigma$ denote the set of ``eigenvalues''.
\[
\sigma=\{\nu\in\Rm;\;\nu\in(-1,1)\;\mbox{or}\;\nu=\pm\nu_j,\;j=0,1,\dots,M-1\}.
\]
\end{defn}

For later calculations, we will prepare some notations. 
Let $\va_z\in\Cm$ be the angle such that
\[
\Re\va_z\in[0,\pi],\qquad\cos\va_z=z,\qquad z\in\Cm.
\]
When $\Im{z}=0$ and we can write $z=\nu\in\Rm$, we have (\ref{defvarphinu}). 
In particular for $\nu\in(-1,1)$, we obtain
\[
g(-\nu,\va_{-\nu})=g(-\nu,\pi-\va_{\nu})=g(\nu,2\pi-\va_{\nu})
=g(\nu,\va_{\nu}).
\]
In the case that $\Re{z}=0$, we have
\[
\va_z=\frac{\pi}{2}-i\sinh^{-1}(\Im{z})
=\frac{\pi}{2}-i\ln\left(\Im{z}+\sqrt{(\Im{z})^2+1}\right).
\]
Suppose that $\Im{z}\neq0$ and $\Re{z}\neq0$. We obtain
\ba
\va_z
&=&
\tan^{-1}\left(\mathrm{sgn}(\Re{z})\sqrt{
\frac{1+|z|^2-r}{1-|z|^2+r}}\right)
\\
&+&
i\ln\left(\frac{\sqrt{|z|^2+1+r}-\mathrm{sgn}(\Im{z})\sqrt{|z|^2-1+r}}
{\sqrt{2}}\right),
\ea
where
\[
r=\sqrt{(|z|^2+1+2\Re{z})(|z|^2+1-2\Re{z})}.
\]

Let $\epsilon$ be an infinitesimally small positive number. For 
$\nu\in(-1,1)$ we have
\ba
\Lambda^{\pm}(\nu)
&:=&
\Lambda(\nu\pm i\epsilon)
\\
&=&
\lambda(\nu)\pm i\frac{\varpi\nu}{2\sqrt{1-\nu^2}}\left[
g(\nu,\va_{\nu})+g(\nu,2\pi-\va_{\nu})\right]
\\
&=&
\lambda(\nu)\pm i\frac{\varpi\nu}{\sqrt{1-\nu^2}}g(\nu,\va_{\nu}).
\ea
We obtain
\ba
\Lambda^+(\nu)-\Lambda^-(\nu)
&=&
\frac{2i\varpi\nu}{\sqrt{1-\nu^2}}g(\nu,\va_{\nu})
\\
&=&
\frac{2i\varpi\nu}{\sqrt{1-\nu^2}}\left[1+2\sum_{m=1}^L\beta_m
\gamma_m(\nu)\cos\left(m\va_{\nu}\right)\right],
\ea
and
\[
\Lambda^+(\nu)\Lambda^-(\nu)
=\lambda(\nu)^2+\frac{\varpi^2\nu^2}{1-\nu^2}g(\nu,\va_{\nu})^2.
\]

We can estimate the number of discrete eigenvalues as follows.

\begin{prop}
Suppose $\Lambda^+(\nu)\Lambda^-(\nu)\neq0$ for $\nu\in[-1,1]$. 
Then $M\le L+1$.
\end{prop}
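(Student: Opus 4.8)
The plan is to count the zeros of $\Lambda$ in $\Cm\setminus[-1,1]$ by the argument principle and to control that count through the degree of the polynomial hidden inside $g(\nu,\va_{\nu})$. There are exactly $2M$ such zeros (the pairs $\pm\nu_j$), so I would write
\[
2M=\frac{1}{2\pi}\,\Delta\arg\Lambda
\]
over a contour made of a large circle $|z|=R$ together with a loop tightly enclosing the cut $[-1,1]$. Since $\gamma_m(z)$ has degree $m$ while the products $z\gamma_m(z)P_m(z)$ and $zP_0(z)$ tend to finite limits as $z\to\infty$ (indeed $zP_0(z)\to1$), $\Lambda(z)$ approaches a nonzero constant at infinity; hence the large circle contributes nothing and only the loop around the cut survives.

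On the two banks of the cut I would use the boundary values computed just before the statement. Because the coefficients are real, $\Lambda^-(\nu)=\overline{\Lambda^+(\nu)}$, so the two banks contribute equally and the loop reduces to $2\Theta$ plus the contributions of two shrinking arcs about $\pm1$, where $\Theta$ is the net change of $\arg\Lambda^+(\nu)$ as $\nu$ runs over $(-1,1)$. Here
\[
\Lambda^+(\nu)=\lambda(\nu)+i\,\frac{\varpi\nu}{\sqrt{1-\nu^2}}\,g(\nu,\va_{\nu}),
\]
and the hypothesis $\Lambda^+(\nu)\Lambda^-(\nu)=\lambda(\nu)^2+\frac{\varpi^2\nu^2}{1-\nu^2}g(\nu,\va_{\nu})^2\ne0$ guarantees that this curve never meets the origin, so $\arg\Lambda^+$ is well defined; moreover wherever the imaginary part vanishes one must have $\lambda(\nu)\ne0$, so such points are genuine transversal crossings of the real axis.

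The key observation is that $\Im\Lambda^+(\nu)$ vanishes only where $\nu\,g(\nu,\va_{\nu})=0$. Since $\cos(m\va_{\nu})$ equals the Chebyshev polynomial $T_m(\nu)$ of degree $m$ and $\gamma_m(\nu)$ has degree $m$, the function
\[
g(\nu,\va_{\nu})=1+2\sum_{m=1}^{L}\beta_m\gamma_m(\nu)T_m(\nu)
\]
is an \emph{even} polynomial in $\nu$ of degree $2L$; it therefore has at most $2L$ zeros in $(-1,1)$, and together with the factor $\nu$ the imaginary part of $\Lambda^+$ changes sign at most $2L+1$ times. Consequently $(-1,1)$ splits into at most $2L+2$ subintervals, on each of which $\Lambda^+(\nu)$ stays in one open half-plane, so $\arg\Lambda^+$ changes by at most $\pi$ across each, whence $|\Theta|\le(2L+2)\pi$.

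Finally I would assemble the pieces. Near $z=\pm1$ one has $P_m(z)=O\!\bigl((z^2-1)^{-1/2}\bigr)$, so $\Lambda(z)$ behaves like $(z\mp1)^{-1/2}$ and each arc contributes a half-turn, for a fixed total of $2\pi$ (the sign being fixed by the orientation, and confirmed by the check below); combining,
\[
2M=\frac{1}{2\pi}\bigl(2\Theta+2\pi\bigr)=\frac{\Theta}{\pi}+1\le(2L+2)+1=2L+3,
\]
and since $2M$ is even this forces $2M\le2L+2$, i.e.\ $M\le L+1$. The delicate step—and the main obstacle—is precisely the endpoint analysis: one must verify that the square-root branch of $\Lambda$ at $\nu=\pm1$ is nondegenerate so that each arc really contributes exactly a half-turn, and that no zero of $\Im\Lambda^+$ is an even-order (tangential) touching of the real axis. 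Both are what the hypothesis $\Lambda^+\Lambda^-\ne0$ on the \emph{closed} interval $[-1,1]$ secures, and this is where I would concentrate the rigor. (The identity $p_{m_B+1}(z)\Lambda(z)=\gamma_{m_B+1}(z)-\varpi zg(z,\va_z)P_{m_B+1}(z)$ from the previous proof offers an alternative, purely algebraic route to the same bound.) As a cross-check, the isotropic case $L=0$ gives $g\equiv1$, a single sign change at $\nu=0$, $\Theta=\pi$, and $M=1$, matching the unique pair $\pm(1-\varpi^2)^{-1/2}$.
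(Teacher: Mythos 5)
Your proof is correct and takes essentially the same route as the paper's: the argument principle on a contour enclosing the cut $[-1,1]$, with the winding of $\Lambda$ controlled by counting the zeros of $\Im\Lambda^+(\nu)=\frac{\varpi\nu}{\sqrt{1-\nu^2}}\,g(\nu,\va_{\nu})$, which is exactly where the even, degree-$2L$ polynomial structure of $g(\nu,\va_{\nu})$ enters in both arguments. The only differences are bookkeeping: the paper uses the symmetry $\Lambda^+(\nu)=\Lambda^-(-\nu)$ to reduce everything to the change of $\arg\Lambda^+$ over $[0,1]$, so it needs neither your evenness-of-$2M$ step nor the explicit endpoint-arc accounting (which it leaves implicit); moreover your flagged endpoint concern is harmless, since in the degenerate case $g(\pm1,\va_{\pm1})=0$ the arcs contribute nothing and the bound only improves.
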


\begin{proof}
We prove the statement relying on the argument principle \cite{Mika61}. 
Since $\Lambda(z)$ is holomorphic in the whole plane cut between $-1$ and 
$1$, according to the argument principle, the number of its roots is given by 
\[
2M=\frac{1}{2\pi}\Delta_C\mathop{\mathrm{arg}}\Lambda(z),
\]
where $\Delta_C$ represents the change around the contour $C$ which 
encircles the cut on the real axis from $-1$ to $1$. Due to the assumed 
condition $\Lambda^+(\nu)\Lambda^-(\nu)\neq0$, we have
\[
2M=\frac{1}{2\pi}\left[\Delta_{C_+}\mathop{\mathrm{arg}}\Lambda^+(\nu)
+\Delta_{C_-}\mathop{\mathrm{arg}}\Lambda^-(\nu)\right],
\]
where $\Delta_{C_{\pm}}$ are changes from $1$ to $-1$ and from $-1$ to $1$, 
respectively. Noting that
\[
\Lambda^+(\nu)=\Lambda^-(-\nu),\qquad
\mathop{\mathrm{arg}}\Lambda^+(\nu)=-\mathop{\mathrm{arg}}\Lambda^+(-\nu),
\]
we finally obtain
\[
M=\frac{1}{\pi}\Delta_{0\to1}\mathop{\mathrm{arg}}\Lambda^+(\nu),
\]
where $\Delta_{0\to1}$ is the change when $\nu$ goes from $0$ to $1$. Note that
\[
\mathop{\mathrm{arg}}\Lambda^+(0)=\mathop{\mathrm{arg}}\Lambda^-(0)=0.
\]
We see that $M$ equals one plus the number that $\Lambda^+(\nu)$ crosses the 
real axis as $\nu$ moves from $0$ to $1$, or the number of roots of 
$\Im\Lambda^+(\nu)$. Since $g(\nu,\va_{\nu})$ is an even polynomial of degree 
$2L$, there are at most $L$ zeros on $(0,1)$. Therefore $M\le L+1$.
\end{proof}

\begin{rmk}
In the case of isotropic scattering ($L=0$), $\Lambda(z)$ is obtained as
\[
\Lambda(z)=1-\frac{\varpi z}{\sqrt{z+1}\sqrt{z-1}}.
\]
This $\Lambda(z)$ has only two roots $z=\pm\nu_0$ ($\nu_0>1$), i.e., 
$\Lambda(\pm\nu_0)=0$, and we obtain
\be
\nu_0=\frac{1}{\sqrt{1-\varpi^2}}.
\label{exactnu0}
\ee
Thus the largest eigenvalue $\nu_0$ can be explicitly written down in 
flatland. 
We note that in three dimensions with planar symmetry the largest eigenvalue 
is only obtained as a solution to the following transcendental equation 
\cite{Case60,Case-Zweifel}
\[
1=\varpi\nu_0\tanh^{-1}(1/\nu_0).
\]
\end{rmk}

In the rest of this section, we explore orthogonality relations for 
$\phi(\nu,\va)$.

\begin{lem}
\label{orthlem}
Suppose $\nu_1,\nu_2\in\sigma$ are different, i.e., $\nu_1\neq\nu_2$. Then,
\[
\int_0^{2\pi}\mu\phi(\nu_1,\va)\phi(\nu_2,\va)\,d\va=0.
\]
\end{lem}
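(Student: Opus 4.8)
The plan is to follow the classical Case-method argument: cross-multiply the defining equation (\ref{phieq}) for the two eigenvalues and exploit the antisymmetry of the resulting expression. The central difficulty is that when $\nu_1,\nu_2\in(-1,1)$ the eigenfunctions $\phi(\nu_1,\va)$ and $\phi(\nu_2,\va)$ each carry a Dirac mass [see (\ref{defsingulareigenfunctions})], so their pointwise product is not a priori meaningful. I would circumvent this by never forming such a product directly, but instead rewriting it so that one factor becomes smooth before any integration.

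First I would record (\ref{phieq}) for $\nu=\nu_1$ and $\nu=\nu_2$,
\[
\left(1-\frac{\mu}{\nu_1}\right)\phi(\nu_1,\va)=\frac{\varpi}{2\pi}g(\nu_1,\va),\qquad
\left(1-\frac{\mu}{\nu_2}\right)\phi(\nu_2,\va)=\frac{\varpi}{2\pi}g(\nu_2,\va).
\]
The key algebraic step is the identity $\mu=\frac{\nu_1\nu_2}{\nu_1-\nu_2}\left[(1-\mu/\nu_1)-(1-\mu/\nu_2)\right]$, valid because $\nu_1\neq\nu_2$, which lets me rewrite
\[
\mu\,\phi(\nu_1,\va)\phi(\nu_2,\va)
=\frac{\nu_1\nu_2}{\nu_1-\nu_2}\frac{\varpi}{2\pi}
\left[g(\nu_1,\va)\phi(\nu_2,\va)-g(\nu_2,\va)\phi(\nu_1,\va)\right].
\]
Each term on the right pairs a smooth function $g(\nu_i,\cdot)$ against a single eigenfunction, so it is a well-defined distribution; this supplies the correct interpretation of the product on the left.

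Integrating over $\va\in[0,2\pi)$, the target integral becomes a finite multiple of $\int_0^{2\pi}\left[g(\nu_1,\va)\phi(\nu_2,\va)-g(\nu_2,\va)\phi(\nu_1,\va)\right]\,d\va$. Using the definition (\ref{deffuncg}) of $g$, the normalization $\int_0^{2\pi}\phi(\nu,\va)\,d\va=1$, and the moment relations (\ref{defgammam}), I would evaluate
\[
\int_0^{2\pi}g(\nu_1,\va)\phi(\nu_2,\va)\,d\va
=1+2\sum_{m=1}^L\beta_m\gamma_m(\nu_1)\gamma_m(\nu_2).
\]
This expression is manifestly symmetric under $\nu_1\leftrightarrow\nu_2$, so the bracketed difference vanishes identically. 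Since the prefactor $\nu_1\nu_2/(\nu_1-\nu_2)$ is finite and nonzero, the orthogonality $\int_0^{2\pi}\mu\phi(\nu_1,\va)\phi(\nu_2,\va)\,d\va=0$ follows at once.

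The only delicate point is the meaning of the product of two singular eigenfunctions; the rewriting above resolves it by converting one factor into the smooth polynomial $g$ prior to integration, so no regularization of a product of Dirac masses is ever required. I would also note that (\ref{defgammam}) and the normalization hold uniformly for the discrete eigenvalues $\nu=\pm\nu_j$ and for the continuum values $\nu\in(-1,1)$; hence the same computation covers every pair $\nu_1,\nu_2\in\sigma$ at once, and no separate treatment of the discrete, continuum, or mixed cases is needed.
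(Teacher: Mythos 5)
Your proof is correct and is essentially the paper's own argument: your identity $\mu=\frac{\nu_1\nu_2}{\nu_1-\nu_2}\left[\left(1-\mu/\nu_1\right)-\left(1-\mu/\nu_2\right)\right]$ is just a repackaging of the paper's step of multiplying each copy of (\ref{phieq}) by the other eigenfunction, integrating, and subtracting, and both proofs then conclude via the normalization, the moment relations (\ref{defgammam}), and the symmetry of $1+2\sum_{m=1}^L\beta_m\gamma_m(\nu_1)\gamma_m(\nu_2)$ under $\nu_1\leftrightarrow\nu_2$. Your explicit remark on why the distributional product is harmless for $\nu_1\neq\nu_2$ is a nice presentational addition but does not change the route.
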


\begin{proof}
We consider the following two equations.
\ba
\left(1-\frac{\mu}{\nu_1}\right)\phi(\nu_1,\va)
&=&
\frac{\varpi}{2\pi}+\frac{\varpi}{\pi}\sum_{m=1}^L\beta_m
\gamma_m(\nu_1)\cos{m\va},
\\
\left(1-\frac{\mu}{\nu_2}\right)\phi(\nu_2,\va)
&=&
\frac{\varpi}{2\pi}+\frac{\varpi}{\pi}\sum_{m=1}^L\beta_m
\gamma_m(\nu_2)\cos{m\va}.
\ea
We multiply the upper equation by $\phi(\nu_2,\va)$ and the lower equation 
by $\phi(\nu_1,\va)$, integrate over $\va$, and subtract the second equation 
from the first equation. We obtain
\[
\left(\frac{1}{\nu_2}-\frac{1}{\nu_1}\right)\int_0^{2\pi}\mu
\phi(\nu_1,\va)\phi(\nu_2,\va)\,d\va=0.
\]
Thus the proof is completed.
\end{proof}

\begin{thm}
\label{orth}
Consider $\nu,\nu'\in\sigma$. Let $\mathcal{N}(\nu)$ be the normalization 
factor in (\ref{normalizationfactor}). We have
\[
\int_0^{2\pi}\mu\phi(\nu,\va)\phi(\nu',\va)\,d\va
=\mathcal{N}(\nu)\delta(\nu-\nu').
\]
Here the Dirac delta function $\delta(\nu-\nu')$ is read as the Kronecker 
delta $\delta_{\nu,\nu'}$ for $\nu,\nu'\notin[-1,1]$.
\end{thm}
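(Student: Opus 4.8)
The plan is to reduce the whole statement to the diagonal, because the off-diagonal part is already available: whenever $\nu\neq\nu'$ with $\nu,\nu'\in\sigma$, Lemma \ref{orthlem} gives $\int_0^{2\pi}\mu\phi(\nu,\va)\phi(\nu',\va)\,d\va=0$, which is simultaneously the Kronecker statement for two distinct discrete eigenvalues, the vanishing for a discrete/continuum pair, and the support statement $\delta(\nu-\nu')=0$ for two distinct continuum labels. So I would split the remaining work into the discrete diagonal ($\nu=\nu'=\pm\nu_j\notin[-1,1]$) and the continuum diagonal ($\nu,\nu'\in(-1,1)$), and attack them by genuinely different mechanisms, since the limit $\nu'\to\nu$ of the bilinear relation in Lemma \ref{orthlem} is $0/0$ and does not deliver the norm on its own.

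For the discrete diagonal the $\delta$-term of (\ref{defsingulareigenfunctions}) drops (for $\nu\notin[-1,1]$ the point $\mu=\nu$ is never reached by $\mu\in[-1,1]$), so $\phi(\nu,\va)=\frac{\varpi\nu}{2\pi}\frac{g(\nu,\va)}{\nu-\mu}$ is an ordinary function and $\int_0^{2\pi}\mu\,\phi(\nu,\va)^2\,d\va$ is classical. First I would write $\mu/(\nu-\mu)^2=\nu/(\nu-\mu)^2-1/(\nu-\mu)$ to split it into the first moment $\int_0^{2\pi} g^2/(\nu-\mu)\,d\va$ and the second moment $\int_0^{2\pi} g^2/(\nu-\mu)^2\,d\va$. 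The first moment collapses to a finite sum: since $\frac{\varpi\nu}{2\pi}\frac{g}{\nu-\mu}=\phi$, one has $\frac{\varpi\nu}{2\pi}\int_0^{2\pi}\frac{g^2}{\nu-\mu}\,d\va=\int_0^{2\pi} g\,\phi\,d\va=1+2\sum_{m=1}^L\beta_m\gamma_m(\nu)^2$ by $\int_0^{2\pi}\phi\,d\va=1$ and $\gamma_m(\nu)=\int_0^{2\pi}\phi\cos m\va\,d\va$. Differentiating this first-moment identity in $\nu$ then introduces the second moment together with $\frac{d\Lambda}{d\nu}$ from (\ref{defbigLambda}), so the whole integral becomes proportional to $\frac{d\Lambda}{d\nu}$. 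The on-axis value $g(\nu,\va_\nu)$ appearing in (\ref{normalizationfactor}) is supplied by the Christoffel--Darboux identity (\ref{CDeq3}), $\varpi\nu\,g(\nu,\va_\nu)=\gamma_{m_B}(\nu)p_{m_B+1}(\nu)-\gamma_{m_B+1}(\nu)p_{m_B}(\nu)$, which ties $g(\nu,\va_\nu)$ to the same recurrence polynomials already used to build $\Lambda$; assembling these pieces yields the Kronecker case with value $\mathcal{N}(\nu_j)=\left(\frac{\varpi\nu_j}{2}\right)^2 g(\nu_j,\va_{\nu_j})\frac{d\Lambda}{d\nu}$.

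The continuum diagonal is the hard part. I would organize it through the angle-to-cosine reduction $\int_0^{2\pi}f(\va)\,d\va=\int_{-1}^1\frac{f(\cos^{-1}\mu)+f(\pi+\cos^{-1}\mu)}{\sqrt{1-\mu^2}}\,d\mu$ from Section \ref{1d}, so that both $\delta$'s of (\ref{defsingulareigenfunctions}) act in the $\mu$ variable. Expanding $\phi(\nu,\va)\phi(\nu',\va)$ produces four pieces: $\delta\times\delta$, two principal-value$\times\delta$ cross terms, and principal-value$\times$principal-value. The $\delta\times\delta$ term localizes $\mu=\nu$ and $\mu=\nu'$ and yields a clean $\delta(\nu-\nu')$ carrying the $\lambda(\nu)^2$ contribution (via $\lambda$ from $b(\nu)=\frac{\sqrt{1-\nu^2}}{2}\lambda(\nu)$), while each cross term collapses one $\delta$ and leaves a principal value in $\nu-\nu'$. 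The crucial step is the principal-value$\times$principal-value term, for which I would invoke the Poincar\'e--Bertrand relation, schematically
\[
\mathcal{P}\frac{1}{\nu-\mu}\,\mathcal{P}\frac{1}{\nu'-\mu}
=\mathcal{P}\frac{1}{\nu-\nu'}\left(\mathcal{P}\frac{1}{\nu'-\mu}-\mathcal{P}\frac{1}{\nu-\mu}\right)
+\pi^2\delta(\nu-\mu)\delta(\nu'-\mu),
\]
whose $\pi^2\delta$ term combines with the prefactor $(\varpi\nu/2\pi)^2$ to produce exactly the $\left(\frac{\varpi\nu}{2}\right)^2 g(\nu,\va_\nu)^2$ piece of $\mathcal{N}(\nu)$.

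The main obstacle, and the step demanding the most care, is proving that all the surviving principal-value contributions cancel, so that only the $\delta(\nu-\nu')$ term remains. The principal-value remnant of the Poincar\'e--Bertrand formula must be matched against the two principal-value$\times\delta$ cross terms; this is where I expect the $\lambda$-identities of Section \ref{1d}, in particular $\lambda(-\nu)=\lambda(\nu)$ together with the definition (\ref{deflittlelambda}), and the symmetry $g(\nu,\va_\nu)=g(-\nu,\va_{-\nu})$, to be indispensable. One must also track carefully the Jacobian $1/\sqrt{1-\mu^2}$ and the doubling from the two branches $\cos^{-1}\mu$ and $\pi+\cos^{-1}\mu$, so that the delta coefficients assemble into $\frac{2\nu}{\sqrt{1-\nu^2}}\left[\left(\frac{\varpi\nu}{2}\right)^2 g(\nu,\va_\nu)^2+\lambda(\nu)^2\right]$. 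Once the principal-value parts are shown to vanish, reading off the coefficient of $\delta(\nu-\nu')$ finishes the continuum case and hence the theorem.
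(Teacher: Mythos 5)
Your reduction to the diagonal via Lemma \ref{orthlem} and your treatment of the continuum diagonal are essentially the paper's own proof: for $\nu,\nu'\in(-1,1)$ the paper expands $\phi(\nu,\va)\phi(\nu',\va)$ into the same four pieces, applies the Poincar\'e--Bertrand formula, obtains the $\left(\frac{\varpi\nu}{2}\right)^2g(\nu,\va_\nu)^2$ contribution from the $\pi^2\delta\delta$ term and the $\lambda(\nu)^2$ contribution from the $\delta\times\delta$ term, and asserts (with no more detail than you give) that the surviving principal-value terms cancel. So that half of your plan is fine, and at the same level of rigor as the original.

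The genuine gap is in your discrete diagonal. The identity you propose to differentiate,
\[
\frac{\varpi\nu}{2\pi}\int_0^{2\pi}\frac{g(\nu,\va)^2}{\nu-\mu}\,d\va
=1+2\sum_{m=1}^L\beta_m\gamma_m(\nu)^2,
\]
is derived from $\int_0^{2\pi}\phi(\nu,\va)\,d\va=1$ and $\int_0^{2\pi}\phi(\nu,\va)\cos{m\va}\,d\va=\gamma_m(\nu)$, and for $\nu\notin[-1,1]$ these hold \emph{only at the discrete eigenvalues}, i.e.\ at the finitely many points where $\Lambda$ vanishes. An identity valid at isolated points cannot be differentiated, and since $\Lambda$ appears nowhere in it, differentiating it cannot ``introduce $d\Lambda/d\nu$'' as you claim. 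The identity that holds on all of $\Cm\setminus[-1,1]$ carries a correction: since $\int_0^{2\pi}\phi(z,\va)\,d\va=1-\Lambda(z)$ and, by the recurrences, $\int_0^{2\pi}\phi(z,\va)\cos{m\va}\,d\va=\gamma_m(z)-\Lambda(z)p_m(z)$ with $p_m(z)=\cos(m\va_z)$ the Chebyshev polynomials of (\ref{precurr}), one gets
\[
\frac{\varpi z}{2\pi}\int_0^{2\pi}\frac{g(z,\va)^2}{z-\mu}\,d\va
=1+2\sum_{m=1}^L\beta_m\gamma_m(z)^2-\Lambda(z)\,g(z,\va_z).
\]
If you run your differentiation with the truncated identity, the two moments cancel exactly and you obtain $\int_0^{2\pi}\mu\phi(\nu_j,\va)^2\,d\va=0$; the entire answer comes from differentiating the $-\Lambda(z)g(z,\va_z)$ term you omitted and then setting $\Lambda(\nu_j)=0$, which is what produces the $g(\nu_j,\va_{\nu_j})\,d\Lambda/d\nu$ structure of (\ref{normalizationfactor}). (Note also that $g(\nu,\va_\nu)$ enters simply because $\cos(m\va_\nu)=p_m(\nu)$ in the definition (\ref{deffuncg}); the Christoffel--Darboux identity (\ref{CDeq3}) is not needed here.) This correction is precisely what the paper's different route manufactures: it studies $J(z,z')=\int_0^{2\pi}\mu\frac{g(z,\va)}{z-\mu}\frac{g(z',\va)}{z'-\mu}\,d\va$ with two off-spectrum arguments, evaluates the Cauchy-type piece by a contour integral around the cut using the jump $\Lambda^+-\Lambda^-$, obtaining $\bar{\Gamma}_m(z)=\Gamma_m(z)+\frac{\pi}{\varpi}\Lambda(z)\cos(m\va_z)$, and only then takes $z,z'\to\nu_j$ so that the terms proportional to $\Lambda$ survive only through the difference quotient giving $d\Lambda/d\nu$. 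Your route becomes a legitimate alternative once the corrected identity is established; as written, it fails.
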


\begin{proof}
According to Lemma \ref{orthlem}, the integral vanishes for $\nu\neq\nu'$. 
Hence it is enough if we show
\[
\int_0^{2\pi}\mu\phi(\nu,\va)^2\,d\va=\mathcal{N}(\nu).
\]

In the spirit of \cite{McCormick-Kuscer66}, we begin by defining
\[
J(z,z')=\int_0^{2\pi}\mu\frac{g(z,\va)}{z-\mu}\frac{g(z',\va)}{z'-\mu}\,d\va,
\]
for $z,z'\in\Cm\setminus[-1,1]$. We assume $z\neq z'$. We have
\ba
J(z,z')
&=&
\frac{1}{z'-z}\int_0^{2\pi}\mu g(z,\va)g(z',\va)
\left(\frac{1}{z-\mu}-\frac{1}{z'-\mu}\right)\,d\va
\\
&=&
\frac{1}{z'-z}\Biggl[\bar{\Gamma}_0(z)-\bar{\Gamma}_0(z')+
2\sum_{m=1}^L\beta_m
\Bigl(\gamma_m(z')\bar{\Gamma}_m(z)-\gamma_m(z)\bar{\Gamma}_m(z')\Bigr)\Biggr],
\ea
where
\[
\bar{\Gamma}_m(z)=\int_0^{2\pi}\mu\frac{g(z,\va)}{z-\mu}\cos{m\va}\,d\va.
\]
Let us write $\bar{\Gamma}_m(z)$ as
\be
\bar{\Gamma}_m(z)
=\Gamma_m(z)+\int_0^{2\pi}\mu\frac{g(\mu,\va)}{z-\mu}\cos{m\va}\,d\va,
\label{gammabargamma}
\ee
where
\ba
\Gamma_m(z)&=&
\int_0^{2\pi}\mu\frac{g(z,\va)-g(\mu,\va)}{z-\mu}\cos{m\va}\,d\va
\\
&=&
2\sum_{n=1}^L\beta_n\int_0^{2\pi}\mu\frac{\gamma_n(z)-\gamma_n(\mu)}{z-\mu}
\cos{n\va}\cos{m\va}\,d\va.
\ea
The second term of the right-hand side of (\ref{gammabargamma}) is calculated 
as
\ba
&&
\int_0^{2\pi}\mu\frac{g(\mu,\va)}{z-\mu}\cos{m\va}\,d\va
\\
&&=
\frac{1}{2i\varpi}\int_{-1}^1\frac{\Lambda^+(\mu)-\Lambda^-(\mu)}{z-\mu}
\cos(m\va)\,d\mu
\\
&&=
\frac{i}{2\varpi}\oint\frac{\Lambda(w)\cos\left(m\va_w\right)}{z-w}\,dw
\\
&&=
\frac{\pi}{\varpi}\Lambda(z)\cos\left(m\va_z\right),
\ea
where we chose the contour encircling the interval between $-1$ and $1$. 
Hence we have
\[
\bar{\Gamma}_m(z)=\Gamma_m(z)
+\frac{\pi}{\varpi}\Lambda(z)\cos\left(m\va_z\right).
\]
Therefore,
\ba
J(z,z')
&=&
\frac{1}{z'-z}\Biggl[\Gamma_0(z)-\Gamma_0(z')+
2\sum_{m=1}^L\beta_m
\Bigl(\gamma_m(z')\Gamma_m(z)-\gamma_m(z)\Gamma_m(z')\Bigr)\Biggr]
\\
&+&
\frac{\pi}{\varpi}\frac{g(z',\va_z)\Lambda(z)-g(z,\va_{z'})\Lambda(z')}{z'-z}.
\ea
On the right-hand side of the above equation, the first part vanishes. 
The last term can be rewritten as
\ba
\frac{g(z',\va_z)\Lambda(z)-g(z,\va_{z'})\Lambda(z')}{z'-z}
&=&
\frac{g(z',\va_z)-g(z,\va_z)}{z'-z}\Lambda(z)
-\frac{g(z,\va_{z'})-g(z,\va_z)}{z'-z}\Lambda(z')
\\
&-&
g(z,\va_z)\frac{\Lambda(z')-\Lambda(z)}{z'-z}.
\ea
Note that
\ba
\frac{g(z',\va_z)-g(z,\va_z)}{z'-z}
&=&
2\sum_{m=1}^L\beta_m\frac{\gamma_m(z')-\gamma_m(z)}{z'-z}\cos(m\va_z),
\\
\frac{g(z,\va_{z'})-g(z,\va_z)}{z'-z}
&=&
2\sum_{m=1}^L\beta_m\gamma_m(z)
\frac{\cos(m\va_{z'})-\cos(m\va_z)}{z'-z}.
\ea
Let $\nu\notin[-1,1]$ be a discrete eigenvalue. We bring $z$ to $\nu$ and 
then let $z'$ approach $\nu$. We obtain
\ba
\lim_{z'\to\nu}\lim_{z\to\nu}J(z,z')
&=&
-g(\nu,\va_{\nu})\lim_{z'\to\nu}\lim_{z\to\nu}
\frac{\Lambda(z')-\Lambda(z)}{z'-z}
\\
&=&
-g(\nu,\va_{\nu})\frac{d\Lambda(\nu)}{d\nu}.
\ea
That is,
\[
\int_0^{2\pi}\mu\phi(\nu,\va)^2\,d\va
=\left(\frac{\varpi\nu}{2\pi}\right)^2J(\nu,\nu)
=\left(\frac{\varpi\nu}{2\pi}\right)^2
g(\nu,\va_{\nu})\frac{d\Lambda(\nu)}{d\nu}.
\]

Next we suppose that $\nu,\nu'\in(-1,1)$. We need to be careful about 
changing the order of integrals. Using the Poincare-Bertrand formula 
\cite{Muskhelishvili}:
\[
\frac{\mathcal{P}}{\nu-\mu}\frac{\mathcal{P}}{\nu'-\mu}
=\frac{1}{\nu-\nu'}\left(\frac{\mathcal{P}}{\nu'-\mu}
-\frac{\mathcal{P}}{\nu-\mu}\right)
+\pi^2\delta(\nu-\mu)\delta(\nu'-\mu),
\]
we have
\ba
\int_0^{2\pi}\mu\phi(\nu,\va)\phi(\nu',\va)\,d\va
&=&
\left(\frac{\varpi}{2\pi}\right)^2\nu\nu'\int_0^{2\pi}\mu\mathcal{P}
\frac{g(\nu,\va)}{\nu-\mu}\mathcal{P}\frac{g(\nu',\va)}{\nu'-\mu}\,d\va
\\
&+&
\frac{\varpi\nu}{2\pi}\lambda(\nu')\pint_0^{2\pi}\mu\frac{g(\nu,\va)}{\nu-\mu}
\delta(\nu'-\mu)\,d\va
\\
&+&
\frac{\varpi\nu'}{2\pi}\lambda(\nu)\pint_0^{2\pi}\mu
\frac{g(\nu',\va)}{\nu'-\mu}\delta(\nu-\mu)\,d\va
\\
&+&
\lambda(\nu)\lambda(\nu')\int_0^{2\pi}\mu\delta(\nu-\mu)\delta(\nu'-\mu)\,d\va.
\ea
Therefore,
\ba
&&
\int_0^{2\pi}\mu\phi(\nu,\va)^2\,d\va
\\
&&=
\lim_{\nu'\to\nu}\frac{1}{\nu-\nu'}\pint_0^{2\pi}\mu
\left[\frac{\varpi\nu}{2\pi}g(\nu,\va)\phi(\nu',\va)-\frac{\varpi\nu'}{2\pi}
g(\nu',\va)\phi(\nu,\va)\right]\,d\va
\\
&&+
\delta(\nu-\nu')\lim_{\nu'\to\nu}\int_0^{2\pi}\mu
\left[\frac{\varpi^2\nu\nu'}{4}g(\nu,\va)g(\nu',\va)+\lambda(\nu)\lambda(\nu')
\right]\delta(\nu-\mu)\,d\va.
\ea
The first term on the right-hand side vanishes. The second term on the 
right-hand side is computed as
\ba
&&
\lim_{\nu'\to\nu}\int_0^{2\pi}\mu\left[\frac{\varpi^2\nu\nu'}{4}g(\nu,\va)
g(\nu',\va)+\lambda(\nu)\lambda(\nu')\right]\delta(\nu-\mu)\,d\va
\\
&&
=\left[\left(\frac{\varpi\nu}{2}\right)^2g(\nu,\va_{\nu})^2+\lambda(\nu)^2
\right]\frac{2\nu}{\sqrt{1-\nu^2}}.
\ea
\end{proof}

\section{One-dimensional Green's function}
\label{1dGreen}

Let us consider the Green's function $G(x,\va;\va_0)$ which satisfies
\[
\left(\mu\frac{\pp}{\pp x}+1\right)G(x,\va;\va_0)=
\varpi\int_0^{2\pi}p(\va,\va')G(x,\va';\va_0)\,d\va'+
\delta(x)\delta(\va-\va_0),
\]
and $G(x,\va;\va_0)\to0$ as $|x|\to\infty$.  The completeness of singular 
eigenfunctions can be shown in the usual way \cite{Mika61}. The Green's 
function is given by
\[
\left\{\begin{aligned}
G(x,\va;\va_0)=
\sum_{j=0}^{M-1}A_{j+}\psi_{\nu_j}(x,\va)
+\int_0^1A(\nu)\psi_{\nu}(x,\va)\,d\nu,
&\quad x>0,
\\
G(x,\va;\va_0)=
-\sum_{j=0}^{M-1}A_{j-}\psi_{-\nu_j}(x,\va)
-\int_{-1}^0A(\nu)\psi_{\nu}(x,\va)\,d\nu,
&\quad x<0,
\end{aligned}
\right.
\]
with some coefficients $A_{j\pm}$ ($j=0,\dots,M-1$) and $A(\nu)$. 
The jump condition is written as
\[
G(0^+,\va;\va_0)-G(0^-,\va;\va_0)=\frac{1}{\mu}\delta(\va-\va_0).
\]
Hence we have
\[
\sum_{j=0}^{M-1}\Bigl[A_{j+}\phi(\nu_j,\va)+A_{j-}\phi(-\nu_j,\va)\Bigr]
+\int_{-1}^1A(\nu)\phi(\nu,\va)\,d\nu=
\frac{1}{\mu}\delta(\va-\va_0).
\]
Using orthogonality relations given in Theorem \ref{orth}, the coefficients 
$A_{j\pm}$, $A(\nu)$ are determined as
\[
A_{j\pm}=\frac{\phi(\pm\nu_j,\va_0)}{\mathcal{N}(\pm\nu_j)},\qquad
A(\nu)=\frac{\phi(\nu,\va_0)}{\mathcal{N}(\nu)}.
\]
Therefore we obtain the one-dimensional Green's function as
\[
G(x,\va;\va_0)=
\sum_{j=0}^{M-1}
\frac{\phi(\pm\nu_j,\va_0)\phi(\pm\nu_j,\va)}{\mathcal{N}(\nu_j)}
e^{-|x|/\nu_j}+
\int_0^1\frac{\phi(\pm\nu,\va_0)\phi(\pm\nu,\va)}{\mathcal{N}(\nu)}
e^{-|x|/\nu}\,d\nu,
\]
where upper signs are used for $x>0$ and lower signs are used for $x<0$.

\section{Two-dimensional transport theory in flatland}
\label{2d}

To find the Green's function in (\ref{rte2d}), we consider the following 
homogeneous equation,
\be
\left(\uv\cdot\nabla+1\right)\psi(\vv{\rho},\va)
=\varpi\int_0^{2\pi}p(\va,\va')\psi(\vv{\rho},\va')\,d\va'.
\label{rtehomo}
\ee
We consider rotation of the reference frame for some unit vector 
$\uvk\in\Cm^2$ ($\uvk\cdot\uvk=1$). By an operator $\rrf{\uvk}$, we measure 
angles in the reference frame whose $x$-axis lies in the direction of $\uvk$. 
We have
\[
\rrf{\uvk}\va=\va-\va_{\uvk},
\]
where $\va_{\uvk}$ is the angle of $\uvk$ in the laboratory frame. The dot 
product $\uv\cdot\uvk$ is expressed as
\[
\uv\cdot\uvk=\rrf{\uvk}\mu.
\]
\mm{We find that the inverse is given by}
\[
\mm{\irrf{\uvk}\va=\va+\va_{\uvk}.}
\]
Let us assume the angular flux $\psi(\vv{\rho},\va)$ has the form
\be
\psi_{\nu}(\vv{\rho},\va)=\rrf{\uvk}\phi(\nu,\va)e^{-\uvk\cdot\vv{\rho}/\nu},
\label{sep}
\ee
where $\nu$ is the separation constant. We will see that this $\phi(\nu,\va)$ 
is the singular eigenfunction developed in Section \ref{1d}.

By plugging (\ref{sep}) into (\ref{rtehomo}) we obtain
\be
\left(1-\frac{\uv\cdot\uvk}{\nu}\right)\rrf{\uvk}\phi(\nu,\va)=
\varpi\int_0^{2\pi}\left[\frac{1}{2\pi}+\frac{1}{\pi}\sum_{m=1}^L\beta_m
\cos\Bigl(m(\rrf{\uvk}\va-\rrf{\uvk}\va')\Bigr)\right]
\rrf{\uvk}\phi(\nu,\va')\,d\va',
\label{plugged}
\ee
where we used $\va-\va'=\rrf{\uvk}\va-\rrf{\uvk}\va'$. By inverse rotation 
$\irrf{\uvk}$, (\ref{plugged}) reduces to (\ref{phieq}). That is, 
$\rrf{\uvk}\phi(\nu,\va)$ is the singular eigenfunction 
\mm{for $\nu\in\sigma$} measured in the reference frame which is rotated by 
$\va_{\uvk}$.

\mm{The unit vector $\uvk$ is written as}
\[
\mm{\uvk=\left(\begin{array}{c}\cos{\va_{\uvk}}\\\sin{\va_{\uvk}}
\end{array}\right),}
\]
\mm{with angle $\va_{\uvk}$. Let us set $\uvk$ as}
\[
\uvk=\uvk(\nu q)
=\left(\begin{array}{c}\hat{k}_x(\nu q)\\-i\nu q\end{array}\right),
\]
where $q\in\Rm$ and
\[
\hat{k}_x(\nu q)=\sqrt{1+(\nu q)^2}.
\]
We will show orthogonality relations for $\rrf{\uvk}\phi(\nu,\va)$. 

\begin{thm}
\label{orth2d}
For $\nu,\nu'\in\sigma$ and any $q\in\Rm$ we have
\[
\int_0^{2\pi}\mu\left[\rrf{\uvk(\nu q)}\phi(\nu,\mu)\right]
\left[\rrf{\uvk(\nu' q)}\phi(\nu',\mu)\right]\,d\va=
\hat{k}_x(\nu q)\mathcal{N}(\nu)\delta(\nu-\nu'),
\]
\end{thm}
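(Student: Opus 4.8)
The plan is to follow the one-dimensional arguments of Lemma~\ref{orthlem} and Theorem~\ref{orth}, while carefully tracking the $\nu$-dependent rotation angle $\va_{\uvk(\nu q)}$. Abbreviate $\Phi_1=\rrf{\uvk(\nu q)}\phi(\nu,\va)=\phi(\nu,\va-\va_{\uvk(\nu q)})$ and $\Phi_2=\rrf{\uvk(\nu' q)}\phi(\nu',\va)=\phi(\nu',\va-\va_{\uvk(\nu' q)})$. As explained before the statement, each is a singular eigenfunction measured in a rotated frame, so applying (\ref{phieq}) at the shifted angle gives
\[
\left(1-\frac{\uv\cdot\uvk(\nu q)}{\nu}\right)\Phi_1=\frac{\varpi}{2\pi}g(\nu,\va-\va_{\uvk(\nu q)}),
\]
and likewise for $\Phi_2$ with $\nu\to\nu'$. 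The decisive point is that, because the $y$-component of $\uvk(\nu q)$ is $-i\nu q$, one has $\uv\cdot\uvk(\nu q)=\hat{k}_x(\nu q)\cos\va-i\nu q\sin\va$, so dividing by $\nu$ yields $\frac{\hat{k}_x(\nu q)}{\nu}\cos\va-iq\sin\va$, in which the $\sin\va$ coefficient $-iq$ does not depend on $\nu$.

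I would then multiply the first equation by $\Phi_2$, the second by $\Phi_1$, subtract, and integrate over $\va$. The $\sin\va$ terms cancel and only the $\cos\va=\mu$ terms survive, leaving
\[
\left(\frac{\hat{k}_x(\nu' q)}{\nu'}-\frac{\hat{k}_x(\nu q)}{\nu}\right)\int_0^{2\pi}\mu\,\Phi_1\Phi_2\,d\va=\frac{\varpi}{2\pi}\int_0^{2\pi}\left[g(\nu,\va-\va_{\uvk(\nu q)})\Phi_2-g(\nu',\va-\va_{\uvk(\nu' q)})\Phi_1\right]d\va.
\]
Next I would show the right-hand side vanishes, the flatland analogue of the cancellation in Lemma~\ref{orthlem}. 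Expanding $g$ in cosines, substituting $\theta=\va-\va_{\uvk(\nu' q)}$ in the first integral and $\theta=\va-\va_{\uvk(\nu q)}$ in the second, and using (\ref{defgammam}), (\ref{defsm}) together with $s_m(\nu)\equiv0$, both integrals collapse to $1+2\sum_{m=1}^L\beta_m\gamma_m(\nu)\gamma_m(\nu')\cos[m(\va_{\uvk(\nu' q)}-\va_{\uvk(\nu q)})]$, so their difference is zero.

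For $\nu\ne\nu'$ this proves the claim once I check the prefactor is nonzero. The map $\nu\mapsto\hat{k}_x(\nu q)/\nu=\mathrm{sgn}(\nu)\sqrt{\nu^{-2}+q^2}$ is strictly monotone on $\{\nu>0\}$ and on $\{\nu<0\}$ and changes sign across $0$, hence it separates any two distinct points of $\sigma$; the integral therefore vanishes. For the diagonal case $\nu=\nu'$ the prefactor is zero and I compute directly. With $\va_0=\va_{\uvk(\nu q)}$, $\cos\va_0=\hat{k}_x(\nu q)$, $\sin\va_0=-i\nu q$, the substitution $\theta=\va-\va_0$ gives $\mu=\hat{k}_x(\nu q)\cos\theta+i\nu q\sin\theta$, so
\[
\int_0^{2\pi}\mu\,\Phi_1^2\,d\va=\hat{k}_x(\nu q)\int_0^{2\pi}\cos\theta\,\phi(\nu,\theta)^2\,d\theta+i\nu q\int_0^{2\pi}\sin\theta\,\phi(\nu,\theta)^2\,d\theta.
\]
The first integral reproduces $\mathcal{N}(\nu)$ (with the accompanying factor $\delta(\nu-\nu')$, read as Kronecker for $\nu\notin[-1,1]$) by Theorem~\ref{orth}, while the second vanishes by parity, since $\phi(\nu,\theta)=\phi(\nu,-\theta)$ makes $\phi(\nu,\theta)^2$ even and $\sin\theta$ is odd. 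This yields exactly $\hat{k}_x(\nu q)\mathcal{N}(\nu)\delta(\nu-\nu')$.

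The main obstacle is conceptual rather than computational: recognizing that the entire reduction hinges on the cancellation of the $\sin\va$ terms, which is possible only because $\uvk(\nu q)$ is chosen with $y$-component $-i\nu q$, making the $\sin\va$ coefficient $\nu$-independent. Once that cancellation is in place, the two-dimensional relation reduces to the one-dimensional one with the extra factor $\hat{k}_x(\nu q)$. A secondary technical subtlety is the meaning of the distributional square $\phi(\nu,\cdot)^2$ in the diagonal continuum case, but this is precisely what the Poincar\'e--Bertrand computation in the proof of Theorem~\ref{orth} already supplies, so I would invoke that result rather than repeat it.
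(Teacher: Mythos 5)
Your proof is correct and follows essentially the same route as the paper's: cross-multiply the two rotated eigenvalue relations, subtract and integrate, use that the $\sin\va$ coefficient $-iq$ in $\uv\cdot\uvk(\nu q)/\nu$ is independent of $\nu$ so only the $\mu$-weighted term survives, and reduce the diagonal case to Theorem~\ref{orth} by undoing the rotation and a parity argument. The only real deviation is in how the scattering terms are cancelled — the paper kills them at once from the kernel symmetry $\beta_{-m}=\beta_m$ of $\sum_m\beta_me^{im(\va-\va')}$ without evaluating any integrals, whereas you evaluate them via the moments (\ref{defgammam}), (\ref{defsm}) with $s_m\equiv0$ after a complex contour shift (a step formally equivalent at the paper's level of rigor, since the paper itself shifts the contour in the diagonal case) — and you additionally verify the injectivity of $\nu\mapsto\hat{k}_x(\nu q)/\nu$ on $\sigma$, which the paper leaves implicit.
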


\begin{proof}
Similar to (\ref{plugged}) we have
\[
\left(1-\frac{\uv\cdot\uvk}{\nu}\right)\rrf{\uvk}\phi(\nu,\va)=
\frac{\varpi}{2\pi}\int_0^{2\pi}\sum_{m=-L}^L\beta_me^{im(\va-\va')}
\rrf{\uvk}\phi(\nu,\va')\,d\va'.
\]
For $\uvk_1=\uvk(\nu_1q)$ and $\uvk_2=\uvk(\nu_2q)$ with a fixed $q$, 
we have
\ba
\left[\rrf{\uvk_2}\phi(\nu_2,\va)\right]
\rrf{\uvk_1}\left(1-\frac{\mu}{\nu_1}\right)\phi(\nu_1,\va)
&=&
\frac{\varpi}{2\pi}\sum_{m=-L}^L\beta_me^{im\va}
\left[\rrf{\uvk_2}\phi(\nu_2,\va)\right]
\\
&\times&
\int_0^{2\pi}e^{-im\va'}\rrf{\uvk_1}\phi(\nu_1,\va')\,d\va',
\\
\left[\rrf{\uvk_1}\phi(\nu_1,\va)\right]
\rrf{\uvk_2}\left(1-\frac{\mu}{\nu_2}\right)\phi(\nu_2,\va)
&=&
\frac{\varpi}{2\pi}\sum_{m=-L}^L\beta_me^{-im\va}
\left[\rrf{\uvk_1}\phi(\nu_1,\va)\right]
\\
&\times&
\int_0^{2\pi}e^{im\va'}\rrf{\uvk_2}\phi(\nu_2,\va')\,d\va',
\ea
where we used $\sum_m\beta_me^{im(\va-\va')}=\sum_m\beta_me^{-im(\va-\va')}$ 
($\beta_{-m}=\beta_m$). By subtraction and integration from $0$ to $2\pi$, 
we obtain
\[
\int_0^{2\pi}
\left(\frac{\rrf{\uvk_2}\mu}{\nu_2}-\frac{\rrf{\uvk_1}\mu}{\nu_1}\right)
\left[\rrf{\uvk_1}\phi(\nu_1,\va)\right]
\left[\rrf{\uvk_2}\phi(\nu_2,\va)\right]\,d\va
=0.
\]
We note that
\[
\rrf{\uvk}\mu=\uv\cdot\uvk=\hat{k}_x(\nu q)\cos{\va}-i\nu q\sin{\va}.
\]
Thus,
\[
\left(\frac{\hat{k}_x(\nu_2q)}{\nu_2}-\frac{\hat{k}_x(\nu_1q)}{\nu_1}\right)
\int_0^{2\pi}\cos{\va}\left[\rrf{\uvk_1}\phi(\nu_1,\va)\right]
\left[\rrf{\uvk_2}\phi(\nu_2,\va)\right]\,d\va
=0.
\]
We obtain
\[
\int_0^{2\pi}\mu\left[\rrf{\uvk_1}\phi(\nu_1,\va)\right]
\left[\rrf{\uvk_2}\phi(\nu_2,\va)\right]\,d\va=0,\qquad \nu_1\neq\nu_2.
\]
When $\nu=\nu_1=\nu_2$, we can calculate the integral as
\ba
\int_0^{2\pi}\mu\left[\rrf{\uvk}\phi(\nu,\va)\right]^2\,d\va
&=&
\int_0^{2\pi}\left[\irrf{\uvk}\mu\right]\phi(\nu,\va)^2\,d\va
\\
&=&
\hat{k}_x(\nu q)\int_0^{2\pi}\mu\phi(\nu,\va)^2\,d\va
\\
&=&
\hat{k}_x(\nu q)\mathcal{N}(\nu),
\ea
where we used $\irrf{\uvk}\mu\mm{=\cos(\irrf{\uvk}\va)=\cos(\va+\va_{\uvk})}
=\hat{k}_x(\nu q)\cos{\va}+i\nu q\sin{\va}$. Thus the orthogonality relations 
are proved.
\end{proof}

\section{Two-dimensional Green's function}
\label{2dGreen}

Let us consider the radiative transport equation (\ref{rte2d}). We can write 
the jump condition as
\[
G(0^+,y,\va;\va_0)-G(0^-,y,\va;\va_0)=\frac{1}{\mu}\delta(y)\delta(\va-\va_0).
\]
With the completeness of $\phi(\nu,\va)$ and plane-wave modes, the Green's 
function can be written as a superposition of $\psi_{\nu}(\vv{\rho},\va)$ in 
(\ref{sep}). Depending on $x$ we can write
\[
\left\{\begin{aligned}
G(\vv{\rho},\va;\va_0)=
\int_{-\infty}^{\infty}\Biggl[\sum_{j=0}^{M-1}A_{j+}(q)
\psi_{\nu_j}(\vv{\rho},\va)
+\int_0^1A(\nu,q)\psi_{\nu}(\vv{\rho},\va)\,d\nu\Biggr]\,\frac{dq}{2\pi},
&\quad x>0,
\\
G(\vv{\rho},\va;\va_0)=
-\int_{-\infty}^{\infty}\sum_{j=0}^{M-1}\Biggl[
A_{j-}(q)\psi_{-\nu_j}(\vv{\rho},\va)
+\int_{-1}^0A(\nu,q)\psi_{\nu}(\vv{\rho},\va)\,d\nu\Biggr]\,\frac{dq}{2\pi},
&\quad x<0.
\end{aligned}
\right.
\]
Here $A_{j\pm}(q)$, $A(\nu,q)$ are some coefficients. The jump condition reads
\ba
&&
\int_{-\infty}^{\infty}e^{iqy}\Biggl[\sum_{j=0}^{M-1}\Bigl(A_{j+}(q)
\rrf{\uvk(\nu_jq)}\phi(\nu_j,\va)+
A_{j-}(q)\rrf{\uvk(-\nu_jq)}\phi(-\nu_j,\va)\Bigr)
\\
&&+
\int_{-1}^1A(\nu,q)\rrf{\uvk(\nu q)}\phi(\nu,\va)\,d\nu\Biggr]
\,\frac{dq}{2\pi}
\\
&&=
\frac{1}{\mu}\delta(y)\delta(\va-\va_0).
\ea
By using Theorem \ref{orth2d} of orthogonality relations, the coefficients 
$A_{j\pm}(q)$, $A(\nu,q)$ are determined as
\[
A_{j\pm}(q)=\frac{\rrf{\uvk(\pm\nu_jq)}\phi(\pm\nu_j,\va_0)}
{\hat{k}_x(\nu_jq)\mathcal{N}(\pm\nu_j)},\quad
A(\nu,q)=\frac{\rrf{\uvk(\nu q)}\phi(\nu,\va_0)}
{\hat{k}_x(\nu q)\mathcal{N}(\nu)}.
\]
Therefore the Green's function is obtained as
\begin{widetext}
\ba
G(\vv{\rho},\va;\va_0)
&=&
\frac{1}{2\pi}\int_{-\infty}^{\infty}e^{iqy}\Biggl[
\sum_{j=0}^{M-1}
\frac{\rrf{\uvk(\pm\nu_jq)}\phi(\pm\nu_j,\va_0)\phi(\pm\nu_j,\va)}
{\hat{k}_x(\nu_jq)\mathcal{N}(\nu_j)}e^{-\hat{k}_x(\nu_jq)|x|/\nu_j}
\\
&+&
\int_0^1\frac{\rrf{\uvk(\pm\nu q)}\phi(\pm\nu,\va_0)\phi(\pm\nu,\va)}
{\hat{k}_x(\nu q)\mathcal{N}(\nu)}e^{-\hat{k}_x(\nu q)|x|/\nu}\,d\nu
\Biggr]\,dq,
\ea
where upper signs are used for $x>0$ and lower signs are used for $x<0$. 
The above Green's function can be rewritten as (\ref{mainresult}).
\end{widetext}

\section{Energy density}
\label{ene}

\mm{Let us calculate the energy density $u$ for an isotropic source 
$\delta(\vv{\rho})$, i.e.,}
\[
\mm{u=\int_0^{2\pi}\int_0^{2\pi}G(\vv{\rho},\va;\va_0)\,d\va d\va_0.}
\]
\mm{Without loss of generality, we can put $y=0$ and assume $x>0$. Using 
(\ref{mainresult}), we obtain}
\[
\mm{u(x)=\frac{1}{2\pi}\int_{-\infty}^{\infty}\left[\sum_{j=0}^{M-1}
\frac{e^{-\hat{k}_x(\nu_jq)x/\nu_j}}{\hat{k}_x(\nu_jq)\mathcal{N}(\nu_j)}
+\int_0^1\frac{e^{-\hat{k}_x(\nu q)x/\nu}}{\hat{k}_x(\nu q)\mathcal{N}(\nu)}
\,d\nu\right]\,dq.}
\]
\mm{We note that}
\[
\mm{\int_{-\infty}^{\infty}\frac{e^{-\hat{k}_x(\nu q)x/\nu}}
{\hat{k}_x(\nu q)}\,dq
=\frac{2}{\nu}\int_0^{\infty}\frac{e^{-\sqrt{1+t^2}x/\nu}}
{\sqrt{1+t^2}}\,dt=
\frac{2}{\nu}\int_1^{\infty}\frac{e^{-sx/\nu}}{\sqrt{s^2-1}}\,ds=
\frac{2}{\nu}K_0\left(\frac{x}{\nu}\right),}
\]
\mm{where $K_0$ is the modified Bessel function of the second kind of order 
zero. Hence we have}
\be
\mm{u(x)=\frac{1}{\pi}\left[\sum_{j=0}^{M-1}\frac{K_0(x/\nu_j)}
{\nu_j\mathcal{N}(\nu_j)}+\int_0^1\frac{K_0(x/\nu)}{\nu\mathcal{N}(\nu)}\,d\nu
\right].}
\label{enedensityresult}
\ee
\mm{The expression (\ref{enedensityresult}) is the general result. Let us 
consider the case of isotropic scattering ($L=0$). Since $g(\nu,\va)=1$, 
we have}
\ba
\mm{\Lambda(z)}&\mm{=}&
\mm{1-\frac{\varpi z}{2\pi}\int_0^{2\pi}\frac{1}{z-\mu}\,d\va=
1-\frac{\varpi z}{\sqrt{z^2-1}},\quad z\in\Cm\setminus[-1,1],}
\\
\mm{\lambda(\nu)}&\mm{=}&
\mm{1-\frac{\varpi\nu}{2\pi}\pint_0^{2\pi}\frac{1}{\nu-\mu}\,d\va=
1-\frac{\varpi|\nu|}{\pi\sqrt{1-\nu^2}}\left(
\ln\left|\frac{\nu}{1+\sqrt{1-\nu^2}}\right|+\cosh^{-1}\frac{1}{\nu}\right),}
\ea
\mm{where $\nu\in(-1,1)$. Using the above $\Lambda(z)$ and $\lambda(\nu)$, we 
can calculate $\mathcal{N}(\nu)$ in (\ref{normalizationfactor}). We note that 
$M=1$ and the positive root $\nu_0$ such that $\Lambda(\nu_0)=0$ is given 
in (\ref{exactnu0}).}

\section{Concluding remarks}
\label{conclusions}

We have obtained the Green's function for the radiative transport equation in 
flatland with separation of variables. As an alternative way, the Green's 
function can also be found with the Fourier transform. This calculation is 
summarized in Appendix. \mm{Assuming the completeness of singular 
eigenfunctions in the presence of boundaries, the Green's function is given 
as a superposition of singular eigenfunctions, and the coefficients 
$A_{j\pm}(q),A(\nu,q)$ in Sec.~\ref{2dGreen} are determined from the boundary 
conditions. If we consider the present extension of Case's method, i.e., the 
separation of variables developed in this paper, in the planar geometry, we 
need the half-range completeness to express the Green's function as a 
superposition of singular eigenfunctions. Moreover we need to establish 
the half-range orthogonality relations to calculate $A_{j\pm}(q),A(\nu,q)$. 
It is another important future work to develop the separation of variables 
for the time-dependent radiative transport equation.}


\appendix

\section{Fourier transform}

We will find an alternative expression of the Green's function 
(\ref{firstalternative}) by using the Fourier transform 
\cite{Ganapol00,Ganapol15}.

Let us introduce the Fourier transform as
\[
\tilde{G}(\vv{k},\va;\va_0)=
\int_{\Rm^2}e^{-i\vv{k}\cdot\vv{\rho}}G(\vv{\rho},\va;\va_0)\,d\vv{\rho}.
\]
By introducing
\be
\tilde{G}_m(\vv{k})=\int_0^{2\pi}\left[\rrf{\uvk}e^{-im\va}\right]
\tilde{G}(\vv{k},\va;\va_0)\,d\va,
\label{defGm}
\ee
we can write (\ref{rte2d}) as
\be
\left(1+i\vv{k}\cdot\uv\right)\tilde{G}(\vv{k},\va;\va_0)=
\frac{\varpi}{2\pi}\sum_{m=-L}^L\beta_m\left[\rrf{\uvk}e^{im\va}\right]
\tilde{G}_m(\vv{k})+
\delta(\va-\va_0).
\label{tildeGeq}
\ee
Note that $P_m(z)$ in (\ref{bigPdef}) can be written as
\[
P_m(z)=\frac{1}{2\pi}\int_0^{2\pi}\frac{e^{im\va}}{z-\mu}\,d\va.
\]
Hereafter we set
\[
z=\frac{i}{k}.
\]
We then have
\be
\tilde{G}_j(\vv{k})=
\varpi z\sum_{m=-L}^L\beta_mP_{m-j}(z)\tilde{G}_m(\vv{k})+
\frac{z}{z-\uvk\cdot\uv_0}\rrf{\uvk}e^{-ij\va_0},\qquad |j|\le L.
\label{tildeGj}
\ee
Hence
\[
\sum_{m=-L}^L\left[\delta_{jm}-\varpi z\beta_mP_{m-j}(z)\right]
\tilde{G}_m(\vv{k})
=\frac{z}{z-\uvk\cdot\uv_0}\rrf{\uvk}e^{-ij\va_0},
\]
for $|j|\le L$. 
Thus, by using (\ref{tildeGeq}), $\tilde{G}(\vv{k},\va;\va_0)$ can be 
expressed using matrices as
\bea
\tilde{G}(\vv{k},\va;\va_0)
&=&
\frac{z}{z-\uvk\cdot\uv_0}\delta(\va-\va_0)
+\frac{\varpi}{2\pi}\frac{z}{z-\uvk\cdot\uv}\frac{z}{z-\uvk\cdot\uv_0}
\nonumber \\
&\times&
\sum_{m=-L}^L\vv{P}^{\dagger}(\uvk,\va)\vv{W}
\left[\vv{I}-\varpi z\vv{L}(z)\vv{W}\right]^{-1}\vv{P}(\uvk,\va_0).
\nonumber \\
\label{Gtildek}
\eea
Here,
\[
\{\vv{L}(z)\}_{jm}=P_{m-j}(z),\qquad
\{\vv{W}\}_{jl}=\beta_m\delta_{jm},\qquad
\{\vv{P}(\uvk,\va)\}_m=e^{-im(\va-\va_{\uvk})}.
\]
We note that
\[
\int_{\Rm^2}\frac{e^{-\rho}}{\rho}\delta(\va_0-\va_{\hvv{\rho}})
e^{-i\vv{k}\cdot\vv{\rho}}\,d\vv{\rho}
=\frac{1}{1+i\vv{k}\cdot\uv_0}.
\]
Therefore we obtain the first alternative expression:
\bea
G(\vv{\rho},\va;\va_0)
&=&
\frac{e^{-\rho}}{\rho}\delta(\va_0-\va_{\hvv{\rho}})\delta(\va-\va_0)
\nonumber \\
&+&
\frac{\varpi}{(2\pi)^3}\int_{\Rm^2}e^{i\vv{k}\cdot\vv{\rho}}
\frac{M(\vv{k},\va,\va_0)}{(1+i\vv{k}\cdot\uv)(1+i\vv{k}\cdot\uv_0)}\,d\vv{k},
\nonumber \\
\label{firstalternative}
\eea
where
\[
M(\vv{k},\va,\va_0)=
\sum_{m=-L}^L\vv{P}^{\dagger}(\uvk,\va)\vv{W}
\left[\vv{I}-\varpi z\vv{L}(z)\vv{W}\right]^{-1}\vv{P}(\uvk,\va_0).
\]



\end{document}